\newenvironment{proofof}[1]{\smallskip\noindent{\textbf{Proof~of~#1.}}%
  \hspace{1pt}}{\hspace{-5pt}{\nobreak\quad\nobreak\hfill\nobreak%
    $\square$\vspace{2pt}\par}\smallskip\goodbreak}
\renewcommand{\L}[1]{{\mathbf{L}^#1}}
\newcommand{\W}[2]{{\mathbf{W}^{#1,#2}}}
\newcommand{\modulo}[1]{{\left|#1\right|}}
\newcommand{\reali}{{\mathbb{R}}}
\renewcommand{\epsilon}{\varepsilon}
\renewcommand{\phi}{\varphi}
\renewcommand{\theta}{\vartheta}
\title{On the Microscopic Modeling of \\
  Vehicular Traffic on General Networks}
\author{Rinaldo M.~Colombo\thanks{INdAM Unit, University of Brescia (\email{rinaldo.colombo@unibs.it}, \url{http://rinaldo.unibs.it}).} \and Helge Holden\thanks{Department of Mathematical Sciences, NTNU Norwegian
  University of Science and Technology (\email{helge.holden@ntnu.no}, \url{https://www.ntnu.edu/employees/holden}).} \and Francesca
  Marcellini\thanks{INdAM Unit, University of Brescia (\email{francesca.marcellini@unibs.it}).}}
\begin{document}

\maketitle

\begin{abstract}
  We introduce a formalism to deal with the microscopic modeling of
  vehicular traffic on a road network. Traffic on each road is
  uni-directional, and the dynamics of each vehicle is described by a
  Follow-the-Leader model. From a mathematical point of view, this
  amounts to define a system of ordinary differential equations on an
  arbitrary network. A general existence and uniqueness result is
  provided, while priorities at junctions are shown to hinder the
  stability of solutions. We investigate the occurrence of the Braess
  paradox in a time-dependent setting within this model.  The
  emergence of Nash equilibria in a non-stationary situation results
  in the appearance of Braess type paradoxes, and this is supported by
  numerical simulations.
\end{abstract}

\begin{keywords}
  Vehicular traffic, Networks, Follow-the-Leader model, Braess
  paradox, Nash equilibria.
\end{keywords}

\begin{AMS}
  90B20, 91B74, 91D10.
\end{AMS}

\section{Introduction}
\label{sec:I}

The literature on the modeling of vehicular traffic has been growing
very quickly in recent years. A variety of approaches coexists,
typically they can be characterized as either macroscopic or
microscopic.

The former ones are usually based on partial differential equations,
their prototype being the Lighthill--Whitham~\cite{LighthillWhitham}
and Richards~\cite{Richards} model. Deep criticisms~\cite{Daganzo1}
led to the formulation of entirely new continuum models, such
as~\cite{AwRascle}, or multiphase
models~\cite{BlandinWorkGoatinPiccoliBayen, Colombo1.5,
  ColomboMarcelliniRascle, Goatin, Marcellini2017} and models on
networks, starting from~\cite{HoldenRisebroTraffic} up to the recent
monograph~\cite{GaravelloHanPiccoliBook}.

Microscopic models also have a long tradition,
see~\cite{GazisHermanRothery}. They are usually denoted as
Follow-the-Leader models, the dynamics being governed by the
interaction between a vehicle and the vehicle immediately in front of
it.  More precisely, we have
\begin{equation*}
  \dot x_\alpha = v\left(\frac{\ell}{x_{\alpha-1} - x_\alpha}\right),
\end{equation*}
where $x_\alpha<x_{\alpha-1}$ denotes the position of two consecutive
vehicles, each of length $\ell$, driving with a velocity function $v$.

Various connections between the two scalings are found in the
literature, referring to limiting procedures yielding the macroscopic
models as limit of the microscopic ones, as
in~\cite{AwKlarMaterneRascle, DiFrancescoRosini,
  HoldenRisebro2018_uno, HoldenRisebro2018_due}, or mixing the two
scales~\cite{ColomboMarcelliniMixed, ColomboMarcelliniAware,
  LattanzioPiccoli}. Note however that most macroscopic models
prescribe traffic rules at junctions that also require some sort of
flow maximization, see~\cite{GaravelloHanPiccoliBook} for more
details. In the construction below, no such maximization is used, and
this will make a continuum limit more complicated.  However, the
chosen priority rules are sufficient to single out a unique evolution.
Other approaches have been studied in the literature.

Apart from models based on differential equations, many other
mathematical tools are used in the literature to describe traffic on
networks and, where possible, to account for Braess paradox. For
instance, a stochastic approach can be found in~\cite{MR3812279}, an
evolutionary variational inequality model is studied
in~\cite{MR2351220}, while queue theory is applied
in~\cite{LinHong2009}. The assessment of the network performance due
to selfish routing can be found in~\cite{Roughgarden2005}. In contrast
to these approaches, here the dynamics is fully described by ODEs,
with simple priority rules at junctions.

Modern vehicular traffic offers a plethora of modeling challenges --
complicated network geometries, roundabouts, traffic lights, traffic
obstructions, a combination of various agents (pedestrians,
bicyclists, a wide range of different vehicles), noise, pollution,
etc. We here focus on a general network with only one type of
vehicles, but we provide a consistent and rigorous model for behavior
at junctions based on a Follow-the-Leader model. See
also~\cite{MR3541527, MR3905826} for related work.

As far as we are aware of, the microscopic modeling of traffic on a
network has not been formalized systematically before.

Our approach yields a model that comprises a system of (discontinuous)
ordinary differential equations (ODEs) on a network with a concrete
behavior at junctions. Moreover, the present model comprises the
presence of different priorities between roads. Below, we present a
framework where rigorous statements about the microscopic modeling of
vehicle dynamics, complying with priority rules, can be formalized,
proved, and numerically computed.

Within this structure, we formalize an ODE-based model and provide an
existence and uniqueness result for the corresponding evolution, see
Theorem~\ref{thm:1}. By means of an example, we show that the usual
well-posedness estimates may not hold. Indeed, and consistently with
everyday experience, small changes in the departure time of a single
vehicle may lead to large changes in the arrival time of that vehicle,
due for instance to arriving slightly earlier or later at junctions
where priority has to be yielded, see Remark~\ref{rem:no}.

\medskip

A main aim for us has been to investigate the ubiquitous Braess
paradox in a time-dependent setting through deterministic differential
equations.  As far as we know, in this context, the Braess phenomenon
has so far only been analyzed mathematically in the stationary case.
\begin{figure}[h!]
  \centering\input{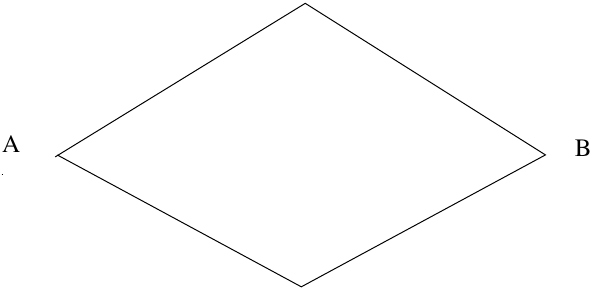_t}\qquad\input{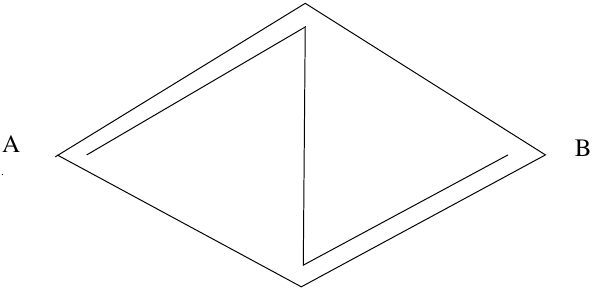_t}
  \caption{Left: network consisting of two routes connecting $A$ to
    $B$. The first route consists of the roads $3$ and $6$; the second
    route consists of the roads $2$ and $5$. Right: network consisting
    of three routes connecting $A$ to $B$. The first route consists of
    the roads $3$ and $6$; the second route consists of the roads $2$
    and $5$. The third route consists of the roads $3$, $4$, and $5$.}
  \label{fig:cl}
\end{figure}
Recall first the simplest example of Braess paradox.  We have a
network consisting of two routes connecting $A$ to $B$, where the
first route consists of the roads $3$ and $6$, while the second route
consists of the roads $2$ and $5$, see Figure~\ref{fig:cl}
(left). Traffic is unidirectional in the direction from $A$ to
$B$. The roads $2$ and $6$ are equal, with unlimited capacity, and the
travel time is $45$ minutes independently of the number of
vehicles. The roads $3$ and $5$ are also equal and the travel time is
$N/100$, where $N$ is the number of vehicle traveling on the road. We
suppose that $4000$ vehicles move from $A$ to $B$. Each driver chooses
the fastest route and the resulting Nash equilibrium amounts to $2000$
drivers traveling along each road. Correspondingly, we find a travel
time of $65$ minutes for each driver.

Then, we add a new road, say number $4$, as in Figure~\ref{fig:cl}
(right), characterized by a negligible travel time. Drivers start
using the new road choosing the route consisting of roads $3$, $4$,
and $5$, reducing their travel time. However, since the new route
$[3,4,5]$ is more convenient than both $[3,6]$ and $[2,5]$, more and
more drivers choose this new route. As a result, the travel time
increases to $80$ minutes for everyone. This is the paradox: contrary
to common sense, adding a new road to a network may make travel times
worse for everyone.

This paradox was introduced by Braess in 1968~\cite{BraessParadox}
with a different example, see also~\cite{Nagurney}, and it has been
observed in real situations. In 1968, for instance, a highway segment
was closed in Stuttgart and traffic improved, see~\cite{K}. In 1990,
in New York the $42$nd street was closed for one day and, again
unexpectedly, traffic improved, see~\cite{NY}.

This paradox appears in other situations as well, not only modeling
vehicular traffic.  In crowd dynamics, the well-known phenomenon of
reducing the evacuation time from a closed space by suitably
positioning obstacles near exits that direct the crowd movement (and
closing a number of paths) is described through a partial differential
equation model in~\cite{CGM}.

Our aim is to capture the Braess paradox in a non-stationary setting
in the present Follow-the-Leader model.  For simplicity we study the
case of the network depicted in Figure~\ref{fig:cl}.  The present
framework allows us to show the dynamic emergence of a Braess-like
situation in a fully non-stationary setting. In contrast to the
examples typically found in the literature~\cite{BraessParadox,
  ColomboHolden2016, Nagurney}, in the examples below we start from an
\emph{empty} network. As vehicles enter it, the measured travel times
show the rise of Braess paradox, as shown by numerical computations.

A key role is here played by our postulating the behavior of drivers
as described by a Nash equilibrium. Indeed, we view drivers as players
competing in a non-cooperative way to reduce their travel times, see
also~\cite{BressanHan2011, BressanHan2012, ColomboHolden2016}. In
particular situations, the solution of the Follow-the-Leader model at
Nash equilibrium leads to the emergence of non-stationary Braess-like
situations as supported numerically.

\smallskip

The next section is devoted to the definition of the microscopic model
on a network. Section~\ref{sec:BP} is devoted to the emergence of
Braess paradox, obtained as Nash equilibrium within the framework of
the model here introduced. The last section collects the analytic
proofs.

\section{Formal Framework}
\label{sec:AF}

The standard first-order \emph{Follow-the-Leader} model is based on
the following Cauchy problem for a system of ordinary differential
equations:
\begin{equation}
  \label{eq:1}
  \left\{
    \begin{array}{l@{\qquad}r@{\;}c@{\;}l}
      \dot x_1 = V_{\max}
      \\
      \dot x_\alpha = v\left(\frac{\ell}{x_{\alpha-1} - x_\alpha}\right)
      & \alpha
      & \in
      & \{2, \ldots, n\} \,,
      \\
      x_\alpha(0) = x^o_\alpha
      & \alpha
      & \in
      & \{1, \ldots, n\} \,.
    \end{array}
  \right.
\end{equation}
Here, $n$ drivers labeled by their positions $x_1, \ldots x_n$ drive
at speed $v \left(\ell/ (x_{\alpha-1} - x_\alpha)\right)$, where
$\ell$ is the length of each vehicle and the speed $v$ satisfies the
condition:
\begin{description}
\item[(SpeedLaw)] $v$ is a Lipschitz continuous function and attains
  values in $[0, V_{\max}]$, i.e.,
  $v \in \W{1}{\infty} (\reali^+; [0, V_{\max}])$, and it is a
  (weakly) decreasing function such that $v (\rho) = 0$ for all
  $\rho \geq 1$.
\end{description}
The constant $V_{\max}$ is an upper bound for the speed of all
vehicles.  The drivers' initial positions are $x^o_0, \ldots,
x^o_n$. It is well-known that the assumption
$x^o_\alpha - x^o_{\alpha-1} \geq \ell$ for
$\alpha \in \{1, \ldots, n\}$ ensures that the solutions
to~\eqref{eq:1} keep satisfying the same bound, i.e.,
$x_\alpha (t) - x_{\alpha-1} (t) \geq \ell$ for all $\alpha$ and for
all $t \geq 0$, meaning that no collision ever occurs.

\bigskip

We now introduce a formalism to deal with the extension
of~\eqref{eq:1} to a general network.

\subsubsection*{Network Structure} The network is a collection of $m$
real intervals: each of them representing a road. Roads are of three
types:
\begin{description}

\item[$\bullet$~Entry Roads:] they are copies of the (open) half--line
  $\left]-\infty, 0\right[$;

\item[$\bullet$~Middle Roads:] they are bounded intervals of the type
  $[0, L_j[$, where $L_j > 0$ is the road length;

\item[$\bullet$~Exit Roads:] they are copies of the half--line
  $\left[0, +\infty\right[\,$.
\end{description}
Entry Roads and Exit Roads have infinite length. We assume throughout
that the vehicle length $\ell$ is negligible with respect to the
(finite) length of each Middle Road: $\ell \ll L_j$ for all $j$
indexing a Middle Road.

To simplify various expressions, it is convenient to assign $L_j = 0$
for all $j$ indexing an Entry Road. It can also be of use to set
$L_j = +\infty$ for each Exit Road. This convention allows us to
introduce the following terminology, of use below: for each Middle
Road or Entry Road $j$, the \emph{end of the road} is the real
interval $\left]L_j - \ell, L_j \right[$. Here, to define the end of
the road we use the vehicle length $\ell$ but choosing a different
length $\ell'$, with $\ell' > \ell$, is also possible.

Road indices are assigned so that whenever two or more roads enter the
same junction, drivers on roads with lower indices have priority.
\begin{figure}[h!]
  \centering\input{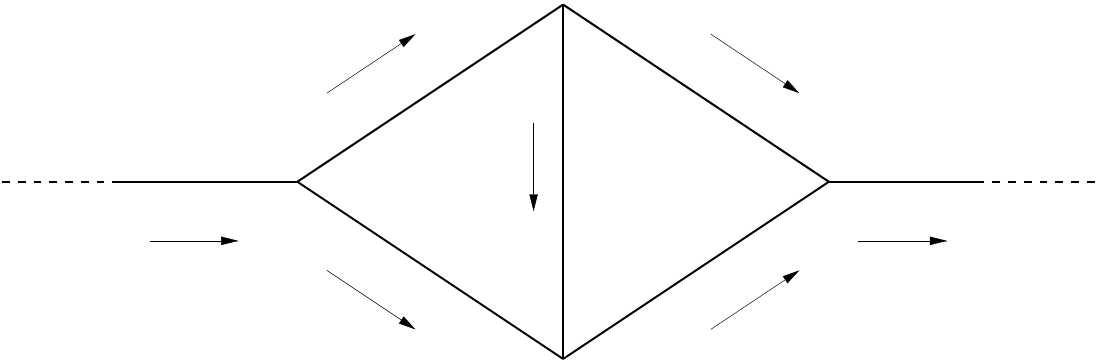_t}
  \caption{\label{fig:Braess} The network notation in the case of
    Braess network, see~\cite{BraessParadox}. Note that roads are
    numbered so that at each junction, roads coming from the right
    have the priority.}
\end{figure}
Throughout, we assume that junctions either have a single incoming
road, or have a single outgoing road. The case of general junctions
with several incoming and outgoing roads can be treated by the same
methods described below, at the cost of a more intricate formalism.

\subsubsection*{Drivers' Route Choices} The $n$ drivers are indexed by
$\alpha$, running between $1$ and $n$. Each driver's route is
identified by the sequence of the indices of the roads that constitute
the route. We denote by $\mathcal{R}_\alpha$ the route followed by
driver $\alpha$. For instance, with reference to the Braess network in
Figure~\ref{fig:Braess}, the route followed by the driver $\alpha=1$
choosing the \emph{``lower''} route is identified by
$\mathcal{R}_1 = [1,2,5,7]$. If the driver $\alpha=2$ follows the
route passing through the road $4$, then
$\mathcal{R}_2 = [1,3,4,5,7]$.

Throughout, $r_\alpha (t)$ stands for the index of the road along
which the $\alpha$th driver is traveling at time $t$. We also write
$j' = \mathcal{N}_\alpha (j)$ meaning that the $\alpha$th driver at
the end of the $j$th road enters the $j'$th road. For instance, with
reference to Figure~\ref{fig:Braess}, if the route of the driver
$\alpha=1$ is $\mathcal{R}_1 = [1,3,6,7]$, then we have
$\mathcal{N}_1 (1) = 3$, $\mathcal{N}_1 (3) = 6$, and
$\mathcal{N}_1 (6) = 7$.

Along each road, we identify the $\alpha$th driver's position through
the time dependent variable $x_\alpha$ ranging in
$\left]-\infty, 0\right[$ along Entry Roads, in $\left[0, L_j\right[$
along Middle Roads and in $\left[0, +\infty\right[$ along Exit Roads.

A key assumption in the construction below amounts to require that no
\emph{loop} is possible for any driver:
\begin{description}
\item[(NoLoop):] No route can contain the same road twice.
\end{description}
Note that the network itself may well contain loops, but
condition~\textbf{(NoLoop)} requires that none of them can be part of
a route.

Of use below is also the following, quite natural, requirement:
\begin{description}
\item[(NoDeadEnd)] The last road in each route is an Exit Road.
\end{description}

\subsubsection*{Drivers' Speed} We now specify the speed chosen by the
$\alpha$th driver, depending on the position and on that of the
vehicles preceding the driver.  We consider several special cases.

\subsubsection*{Far from Junctions} At time $t$ the driver is
positioned at $x_\alpha (t)$ driving along road $j = r_\alpha (t)$. As
long as the $\alpha$th driver is not at the end of the road indexed by
$r_\alpha (t)$, i.e.,~$x_\alpha (t) < L_{r_\alpha (t)} - \ell$, the
speed only depends on the free space ahead, similarly to what happens
in~\eqref{eq:1}:
\begin{equation}
  \label{eq:3}
  \begin{array}{@{}r@{}c@{}l@{}}
    \dot x_\alpha
    & =
    & \left\{
      \begin{array}{@{}l@{\,}l@{}}
        V_{r_\alpha(t)}
        & \mbox{if}
          \left\{
          \begin{array}{@{}l@{}}
            \alpha \mbox{ is not at the end of the road;}
            \\
            \mbox{no one is on the same road in front of } \alpha .
          \end{array}
        \right.
        \\[10pt]
        v_{r_\alpha (t)} \! \! \left(\frac{\ell}{p-x_\alpha}\right)
        & \mbox{if}
          \left\{
          \begin{array}{@{}l@{}}
            \mbox{someone is on the same road in front of } \alpha;
            \\
            p
            \mbox{ is the position of the nearest vehicle in front of}
            \\
            \alpha \mbox{  on the same road.}
          \end{array}
        \right.
      \end{array}
    \right.
    \\[30pt]
    & =
    & \left\{
      \begin{array}{@{}l@{\,}l@{}}
        V_{r_\alpha(t)}
        & \mbox{if}
          \left\{
          \begin{array}{@{}l@{}}
            x_\alpha (t) < L_{r_\alpha (t)}  - \ell;
            \\
            \left\{
            \alpha' \in \{1, \ldots, n\} \colon
            r_{\alpha'} (t) {=} r_\alpha (t) \mbox{ and }
            x_{\alpha'} (t) {>} x_\alpha (t)
            \right\} = \emptyset.
          \end{array}
        \right.
        \\[10pt]
        v_{r_\alpha (t)} \! \! \left(\frac{\ell}{p-x_\alpha}\right)
        & \mbox{if}
          \left\{
          \begin{array}{@{}l@{}}
            \left\{
            \alpha' \in \{1, \ldots, n\} \colon
            r_{\alpha'} (t) {=} r_\alpha (t) \mbox{ and }
            x_{\alpha'} (t) {>} x_\alpha (t)
            \right\} \neq \emptyset;
            \\
            p {=} \min\left\{
            x_{\alpha'} {\in} [0, L_{r_\alpha (t)}[ {\colon}
            r_{\alpha'} (t) {=} r_\alpha (t) \mbox{ and }
            x_{\alpha'} (t) {>} x_\alpha (t)
            \right\} \!.
          \end{array}
        \right.
      \end{array}
    \right.
  \end{array}
  \!\!\!\!\!
\end{equation}
Indeed, the set
$\left\{ \alpha' \in \{1, \ldots, n\} \colon r_{\alpha'} (t) =
  r_\alpha (t) \mbox{ and } x_{\alpha'} (t) > x_\alpha (t) \right\}$
identifies the (indices $\alpha'$ of) drivers preceding $\alpha$ along
the road $r_\alpha (t)$ where $\alpha$ is driving at time $t$. If no
such driver exists, $\alpha$ drives at the maximal speed
$V_{r_\alpha(t)}$ possible along the road $r_\alpha (t)$. On the other
hand, if
$\{ \alpha' \in \{1, \ldots, n\} \colon r_{\alpha'} (t) = r_\alpha (t)
\mbox{ and } x_{\alpha'} (t) > x_\alpha (t) \} \neq \emptyset$, then
the speed $\dot x_\alpha (t)$ of the $\alpha$th driver is adjusted to
the distance between $\alpha$ and the driver at position $p$, who is
the one immediately in front of $\alpha$, as usual in a
\emph{Follow-the-Leader} model.

Note that if $r_\alpha (t)$ is an Exit Road, then we understand that
the condition $x_\alpha < L_{r_\alpha (t)}- \ell$ is true for all
$x_\alpha$.

\subsubsection*{A Fork in the Road} Consider a junction with one road
(either an Entry or a Middle Road) entering it and any number of roads
exiting it. At time $t$ driver $\alpha$ is close to the end of the
Entry Road or the Middle Road $r_\alpha (t)$, in the sense that
$x_\alpha (t) \in [ L_{r_\alpha (t)} - \ell, L_{r_\alpha
  (t)}[$. Driver $\alpha$ chooses the speed $\dot x_\alpha (t)$ taking
into consideration only those drivers preceding him/her along the road
$r_\alpha (t)$ or present in the next road
$\mathcal{N}_\alpha (r_\alpha (t))$ he/she is going to take, see
Figure~\ref{fig:1to2}.

We then set
\begin{displaymath}
  \begin{array}{@{}r@{}c@{}l@{}}
    \dot x_\alpha
    & =
    & \left\{
      \begin{array}{@{}l@{\,}l@{}}
        V_{r_\alpha(t)}
        & \mbox{if}
          \left\{
          \begin{array}{@{\,}l@{}}
            \alpha\mbox{ is at the end of the road;}
            \\
            \mbox{no one is on the same road in front of }\alpha;
            \\
            \mbox{no one is on the road where $\alpha$ is going.}
          \end{array}
        \right.
        \\[20pt]
        v_{r_\alpha (t)} \! \! \left(\frac{\ell}{p + L_{r_\alpha (t)}-x_\alpha}\right)
        & \mbox{if}
          \left\{
          \begin{array}{@{}l@{}}
            \alpha\mbox{ is at the end of the road;}
            \\
            \mbox{no one is on the same road in front of }\alpha;
            \\
            \mbox{someone is on the road where $\alpha$ is going;}
            \\
            p
            \mbox{ is the position of the nearest vehicle in front of}
            \\
            \alpha \mbox{  on the same road.}
          \end{array}
        \right.
      \end{array}
    \right.
  \end{array}
\end{displaymath}
\begin{equation}
  \label{eq:4}
  \begin{array}{@{}r@{}c@{}l@{}}
    \hphantom{\dot x_\alpha}
    & =
    & \left\{
      \begin{array}{@{}l@{\,}l@{}}
        V_{r_\alpha(t)}
        & \mbox{if}
          \left\{
          \begin{array}{@{\,}l@{}}
            x_\alpha (t) > L_{r_\alpha (t)}  - \ell;
            \\
            \left\{
            \alpha' {\in} \{1, \ldots, n\} \colon
            r_{\alpha'} (t) {=} r_\alpha (t) \mbox{ and }
            x_{\alpha'} (t) {>} x_\alpha (t)
            \right\} {=} \emptyset;
            \\
            \left\{
            \alpha' {\in} \{1, \ldots, n\} \colon
            r_{\alpha'} (t) {=} \mathcal{N}_\alpha\left(r_\alpha (t)\right)
            \right\} {=} \emptyset.
          \end{array}
        \right.
        \\[20pt]
        v_{r_\alpha (t)} \! \! \left(\frac{\ell}{p + L_{r_\alpha (t)}-x_\alpha}\right)
        & \mbox{if}
          \left\{
          \begin{array}{@{}l@{}}
            x_\alpha (t) > L_{r_\alpha (t)}  - \ell;
            \\
            \left\{
            \alpha' {\in} \{1, \ldots, n\} \colon
            r_{\alpha'} (t) {=} r_\alpha (t) \mbox{ and }
            x_{\alpha'} (t) {>} x_\alpha (t)
            \right\} {=} \emptyset;
            \\
            \left\{
            \alpha' {\in} \{1, \ldots, n\} \colon
            r_{\alpha'} (t) {=} \mathcal{N}_\alpha\left(r_\alpha (t)\right)
            \right\} \neq \emptyset;
            \\
            p {=} \min\left\{
            x_{\alpha'} {\in} [0, L_{\mathcal{N}_\alpha\left(r_\alpha (t)\right)}] \colon
            r_{\alpha'} (t) {=} \mathcal{N}_\alpha\left(r_\alpha (t)\right)
            \right\}.
          \end{array}
        \right.
      \end{array}
    \right.
  \end{array}
\end{equation}
Indeed, when
$\{ \alpha' \in \{1, \ldots, n\} \colon r_{\alpha'} (t) =
\mathcal{N}_\alpha\left(r_\alpha (t)\right) \}$ is empty, no one is
preceding the $\alpha$th driver along his/her route and the $\alpha$th
driver proceeds at full speed. On the other hand, if
$\{ \alpha' \in \{1, \ldots, n\} \colon r_{\alpha'} (t) =
\mathcal{N}_\alpha\left(r_\alpha (t)\right) \} \neq \emptyset$, then
the driver immediately preceding $\alpha$ is at position $p$, as
defined in~\eqref{eq:4}.
\begin{figure}[h!]
  \centering\input{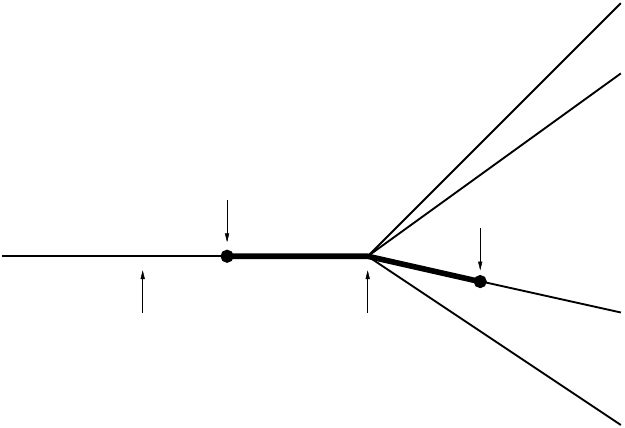tex_t}
  \caption{\label{fig:1to2}Notation used in~\eqref{eq:4}. The
    $\alpha$th driver, located at $x_\alpha (t)$ is approaching the
    end of the $r_\alpha (t)$ road, in the sense that
    $x_\alpha (t) \in [L_{r_\alpha (t)}-\ell,L_{r_\alpha (t)}]$, and
    its predecessor is at $p$ along the road
    $\mathcal{N}_\alpha\left(r_\alpha (t)\right)$.}
\end{figure}
The resulting speed $\dot x_\alpha (t)$ of the $\alpha$th driver is
then chosen according to the usual \emph{Follow-the-Leader} rule, with
$p +L_{r_\alpha (t)} - x_\alpha (t)$ being the physical distance
measured along the road between the $\alpha$th driver and his/her
predecessor, see Figure~\ref{fig:1to2}.

\subsubsection*{Roads Merging} Consider now a junction with several
roads entering a single road. We assume that the roads' indexing
respects the roads' priorities, in the sense that if the roads $j$ and
$j'$ enter the same junction and $j < j'$, then the drivers on the
road $j$ have priority over those on road $j'$. Call $J$ the set of
indices of the roads entering the junction under consideration.

First, we deal with the case of a driver coming from the road that has
the priority over all the other incoming roads. In this case, we have
$r_\alpha (t) = \min J$ by assumption. We then set
\begin{displaymath}
  \begin{array}{@{}r@{}c@{}l@{}}
    \dot x_\alpha
    & =
    & \left\{
      \begin{array}{@{}l@{}l}
        V_{r_\alpha(t)}
        & \mbox{if}
          \left\{
          \begin{array}{@{}l}
            \alpha\mbox{ is at the end of the road;}
            \\
            \alpha\mbox{'s road has the priority;}
            \\
            \mbox{no one is on the same road in front of }\alpha;
            \\
            \mbox{no one is on the road where $\alpha$ is going.}
          \end{array}
        \right.
        \\[25pt]
        v_{r_\alpha (t)} \! \! \left(\frac{\ell}{p + L_{r_\alpha (t)}-x_\alpha}\right)
        & \mbox{if}
          \left\{
          \begin{array}{@{}l}
            \alpha\mbox{ is at the end of the road;}
            \\
            \alpha\mbox{'s road has the priority;}
            \\
            \mbox{no one is on the same road in front of }\alpha;
            \\
            \mbox{someone is on the road where $\alpha$ is going;}
            \\
            p
            \mbox{ is the position of the nearest vehicle in front of}
            \\
            \alpha \mbox{  on the same road.}
          \end{array}
        \right.
      \end{array}
    \right.
  \end{array}
\end{displaymath}
\begin{equation}
  \label{eq:2}
  \begin{array}{@{}r@{}c@{}l@{}}
    \hphantom{\dot x_\alpha}
    & =
    & \left\{
      \begin{array}{@{}l@{}l}
        V_{r_\alpha(t)}
        & \mbox{if}
          \left\{
          \begin{array}{@{}l}
            x_\alpha (t) {>} L_{r_\alpha (t)}  - \ell;
            \\
            r_\alpha (t) {=} \min J;
            \\
            \left\{
            \alpha' {\in} \{1, \ldots, n\} \colon
            r_{\alpha'} (t) {=} r_\alpha (t) \mbox{ and }
            x_{\alpha'} (t) {>} x_\alpha (t)
            \right\} {=} \emptyset;
            \\
            \left\{
            \alpha' {\in} \{1, \ldots, n\} \colon
            r_{\alpha'} (t) {=} \mathcal{N}_\alpha\left(r_\alpha (t)\right)
            \right\} {=} \emptyset.
          \end{array}
        \right.
        \\[25pt]
        v_{r_\alpha (t)} \! \!
        \left(\frac{\ell}{p + L_{r_\alpha (t)}-x_\alpha}\right)
        & \mbox{if}
          \left\{
          \begin{array}{@{}l}
            x_\alpha (t) {>} L_{r_\alpha (t)}  - \ell;
            \\
            r_\alpha (t) {=} \min J;
            \\
            \left\{
            \alpha' {\in} \{1, \ldots, n\} \colon
            r_{\alpha'} (t) {=} r_\alpha (t)
            \mbox{ and } x_{\alpha'} (t) {>} x_\alpha (t)
            \right\} {=} \emptyset;
            \\
            \left\{
            \alpha' {\in} \{1, \ldots, n\} \colon
            r_{\alpha'} (t) {=} \mathcal{N}_\alpha\left(r_\alpha (t)\right)
            \right\} {\neq} \emptyset;
            \\
            p = \min\left\{
            x_{\alpha'} {\in} [0, L_{r_\alpha (t)}] \colon
            r_{\alpha'} (t) {=} \mathcal{N}_\alpha\left(r_\alpha (t)\right)
            \right\}.
          \end{array}
        \right.
      \end{array}
    \right.
  \end{array}
\end{equation}
Similarly to the previous case of the fork in the road, i.e.,
equation~\eqref{eq:4},
$\{ \alpha' \in \{1, \ldots, n\} \colon r_{\alpha'} (t) =
\mathcal{N}_\alpha\left(r_\alpha (t)\right) \}$ is empty whenever the
$\alpha$th driver has free road ahead. When
$\{ \alpha' \in \{1, \ldots, n\} \colon r_{\alpha'} (t) =
\mathcal{N}_\alpha\left(r_\alpha (t)\right) \}$ is nonempty, $p$ as
defined in~\eqref{eq:2} is the position of the first driver in front
of $\alpha$, and $p + L_{r_\alpha (t)} - x_\alpha (t)$ is the length
of the free road in front of the driver $\alpha$, see
Figure~\ref{fig:3t01} (right).

\medskip

Let now the $\alpha$th driver approach the junction along the road
$r_\alpha (t)$ which yields to other roads, so that
$r_\alpha (t) > \min J$. Assume that at the end of road $j$ entering
the junction (i.e.,~$j \in J$) there is no one that has the priority
over the road $r_\alpha (t)$ (i.e.,~$j < r_\alpha (t)$), i.e.,
$\bigcup_{j \in J \colon j<r_\alpha (t)} \{ \alpha' \in \{1, \ldots,
n\} \colon r_{\alpha'} (t) =j \mbox{ and } x_j (t) > L_{j}-\ell \} =
\emptyset$, and there is no one in the road where $\alpha$ is entering
(i.e.,~$\{ \alpha' \in \{1, \ldots, n\} \colon r_{\alpha'} (t) =
\mathcal{N}_\alpha\left(r_\alpha (t)\right) \} = \emptyset$). Then,
$\alpha$ drives at full speed $V_{r_\alpha (t)}$:
\begin{equation}
  \label{eq:6}
  \begin{array}{@{}r@{}c@{}l@{}}
    \dot x_\alpha
    & =
    & V_{r_\alpha(t)}
      \quad \mbox{if}
      \left\{
      \begin{array}{@{}l}
        \alpha\mbox{ is at the end of the road;}
        \\
        \alpha\mbox{'s road does not have the priority;}
        \\
        \mbox{no one is on the same road in front of }\alpha;
        \\
        \mbox{no one is on the road where $\alpha$ is going;}
        \\
        \mbox{no one is at the end of roads having priority over }\alpha.
      \end{array}
    \right.
    \\[30pt]
    & =
    & V_{r_\alpha(t)}
      \quad \mbox{if}
      \left\{
      \begin{array}{@{}l}
        x_\alpha (t) > L_{r_\alpha (t)}  - \ell;
        \\
        r_\alpha (t) > \min J;
        \\
        \left\{
        \alpha' \in \{1, \ldots, n\} \colon
        r_{\alpha'} (t) = r_\alpha (t) \mbox{ and } x_{\alpha'} (t) > x_\alpha (t)
        \right\} = \emptyset;
        \\
        \left\{
        \alpha' \in \{1, \ldots, n\} \colon
        r_{\alpha'} (t) = \mathcal{N}_\alpha\left(r_\alpha (t)\right)
        \right\} = \emptyset;
        \\
        \displaystyle
        \bigcup_{j \in J \colon j<r_\alpha (t)} \{ \alpha' {\in} \{1,
        \ldots, n\} \colon r_{\alpha'} (t) {=} j
        \mbox{ and } x_j (t) {>} L_{j}-\ell \} {=}
        \emptyset.
      \end{array}
    \right.
  \end{array}
\end{equation}

As soon as another driver, say $\alpha'$, is present near to the end
of road $j' = r_{\alpha'} (t)$
(i.e.,~$x_{\alpha'} (t) \in [L_{j'}-\ell, L_{j'}]$) entering the
junction (i.e.,~$j' \in J$) and having priority over the
$r_\alpha (t)$ road (i.e.,~$j' = r_{\alpha'} (t) < r_\alpha (t)$), the
$\alpha$th driver has to yield to $\alpha'$ and stop, see~\eqref{eq:6}
and Figure~\ref{fig:3t01} (left).
\begin{figure}[h!]
  \centering\input{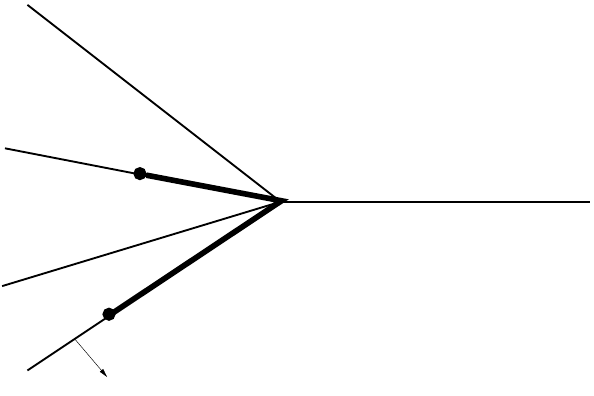tex_t}\qquad\input{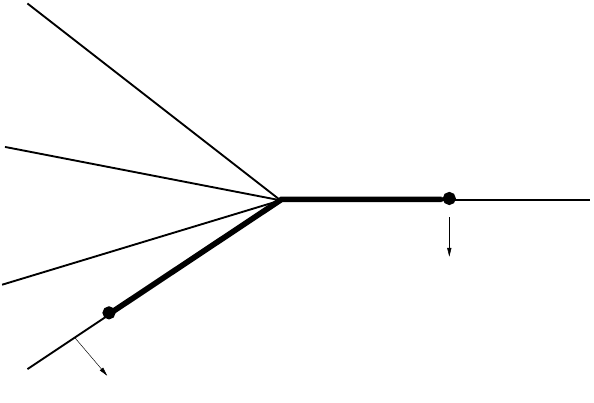tex_t}
  \caption{\label{fig:3t01}Left: notation used in~\eqref{eq:7}: the
    $\alpha$th driver is near the end of the $r_\alpha(t)$ road and
    gives way to the $\alpha'$th driver, since
    $r_{\alpha'} (t) < r_\alpha (t)$. Right: notation used
    in~\eqref{eq:5}: the $\alpha$th driver adjusts his/her speed to
    the position $p$ of the nearest driver in front of him/her on road
    $\mathcal{N}_\alpha\left(r_\alpha (t)\right)$.}
\end{figure}
\begin{equation}
  \label{eq:7}
  \begin{array}{@{}r@{}c@{}l@{}}
    \dot x_\alpha
    & =
    & 0
      \quad\mbox{if}
      \left\{
      \begin{array}{@{\,}l}
        \alpha\mbox{ is at the end of the road;}
        \\
        \alpha\mbox{'s road does not have the priority;}
        \\
        \mbox{no one is on the same road in front of }\alpha;
        \\
        \mbox{someone is at the end of roads having priority over }\alpha.
      \end{array}
    \right.
    \\[25pt]
    & =
    & 0
      \quad\mbox{if}
      \left\{
      \begin{array}{@{\,}l}
        x_\alpha (t) > L_{r_\alpha (t)}  - \ell;
        \\
        r_\alpha (t) \neq \min J;
        \\
        \left\{
        \alpha' \in \{1, \ldots, n\} \colon
        r_{\alpha'} (t) = r_\alpha (t) \mbox{ and } x_{\alpha'} (t) > x_\alpha (t)
        \right\} = \emptyset;
        \\
        \displaystyle
        \bigcup_{j \in J \colon j<r_\alpha (t)} \{ \alpha' \in \{1,
        \ldots, n\} \colon r_{\alpha'} (t) =j \mbox{ and } x_j (t) > L_{j}-\ell \}
        \neq \emptyset.
      \end{array}
    \right.
  \end{array}
\end{equation}

Finally, consider the case when no one is present on the road having
priority over the road, indexed by $r_\alpha (t)$, where the
$\alpha$th driver is moving
(i.e.,~$\bigcup_{j \in J \colon j<r_\alpha (t)} \{ \alpha' \in \{1,
\ldots, n\} \colon r_{\alpha'} (t) =j \mbox{ and } x_j (t) >
L_{j}-\ell \} = \emptyset$), but other vehicles are present on the
$\mathcal{N}_\alpha\left(r_\alpha (t)\right)$ road where $\alpha$ is
heading
(i.e.,~$\{ \alpha' \in \{1, \ldots, n\} \colon r_{\alpha'} (t) =
\mathcal{N}_\alpha\left(r_\alpha (t)\right) \} \neq \emptyset$), see
Figure~\ref{fig:3t01} (right). Then, the $\alpha$th driver adapts
his/her speed to the vehicle in front of him/her:
\begin{equation}
  \label{eq:5}
  \begin{array}{@{}r@{}c@{}l@{}}
    \dot x_\alpha
    & =
    & v_{r_\alpha (t)} \! \! \left(\frac{\ell}{p + L_{r_\alpha (t)}-x_\alpha}\right)
      \mbox{if}
      \left\{
      \begin{array}{@{\,}l}
        \alpha\mbox{ is at the end of the road;}
        \\
        \alpha\mbox{'s road does not have the priority;}
        \\
        \mbox{no one is on the same road in front of }\alpha;
        \\
        \mbox{no one is at the end of roads having priority over }\alpha;
        \\
        \mbox{someone is on the road where $\alpha$ is going;}
        \\
        p
        \mbox{ is the position of the nearest vehicle in front of}
        \\
        \alpha \mbox{  on the same road.}     \end{array}
    \right.
    \\[50pt]
    & =
    & v_{r_\alpha (t)} \! \! \left(\frac{\ell}{p + L_{r_\alpha (t)}-x_\alpha}\right)
      \mbox{if}
      \left\{
      \begin{array}{@{\,}l}
        x_\alpha (t) > L_{r_\alpha (t)}  - \ell;
        \\
        r_\alpha (t) > \min J;
        \\
        \left\{
        \alpha' {\in} \{1, \ldots, n\} \colon
        r_{\alpha'} (t) {=} r_\alpha (t) \mbox{ and }
        x_{\alpha'} (t) {>} x_\alpha (t)
        \right\} {=} \emptyset;
        \\
        \displaystyle
        \!\!\! \bigcup_{{j \in J} \atop {j<r_\alpha (t)}}\!\!
        \{ \alpha' \in \{1,
        \ldots, n\} \colon r_{\alpha'} (t) {=} j ,\; x_j (t) > L_{j}-\ell \}
        {=} \emptyset;
        \\
        \left\{
        \alpha' {\in} \{1, \ldots, n\} \colon
        r_{\alpha'} (t) {=} \mathcal{N}_\alpha\left(r_\alpha (t)\right)
        \right\} {\neq} \emptyset;
        \\
        p = \min\left\{
        x_{\alpha'} \in [0, L_{r_\alpha (t)}] \colon
        r_{\alpha'} (t) = \mathcal{N}_\alpha\left(r_\alpha (t)\right)
        \right\}.
      \end{array}
    \right.
  \end{array}
\end{equation}

\subsubsection*{Existence and Uniqueness of Solutions} Summarizing,
the above formulas~\eqref{eq:3}--\eqref{eq:5} define a system of $n$
ordinary differential equations, which we write
\begin{equation}
  \label{eq:8}
  \dot x_\alpha = \mathcal{V}_\alpha (t,x)
\end{equation}
for short. The definitions above ensure that
$\mathcal{V}_\alpha (t,x) \in [0,V_{\max}]$ for all $i=1, \ldots, n$,
$t \in [0,T]$ and $x \in \reali^n$.

We now introduce a condition that states the absence of collisions
among drivers.  Recall that at time $t$ driver $\alpha$ is located at
$x_\alpha(t)$ on road $r_\alpha(t)$.
\begin{description}
\item[(NoCollision)] For all $\alpha', \alpha'' \in \{1, \ldots, n\}$,
  if $r_{\alpha'} = r_{\alpha''}$, then
  $\modulo{x_{\alpha'} - x_{\alpha''}} \geq \ell$.
\end{description}

\noindent Observe that the above condition does \emph{not} rule out
the following situation. Driver $\alpha$ is located at $x_\alpha$ on
road $j=r_\alpha(t)$ and, say, very near the junction located at the
end of road $j$, so that $x_\alpha\in \left]L_j - \ell,
  L_j\right[$. Driver $\alpha'$ moves along road $j'=r_{\alpha'}(t)$,
also entering the same junction and has the priority over road $j$, so
that $j' < j$. When $\alpha'$ passes the junction, $\alpha$ is stopped
and there may well be a time at which the distance between $\alpha$
and $\alpha'$ is smaller than $\ell$, but with $\alpha$ and $\alpha'$
being on different roads, so that no actual collision takes place.

\begin{theorem}
  \label{thm:1}
  Consider a network of $m$ interconnected roads containing at least
  one Entry Road and one Exit Road. For $j=1, \ldots, m$, on road $j$
  a speed law $v_j$ satisfying~\textbf{(SpeeedLaw)} is given.  Assign
  to $n$ drivers routes $\mathcal{R}_1, \ldots, \mathcal{R}_n$
  satisfying the~\textbf{(NoLoop)} and~\textbf{(NoDeadEnd)}
  conditions. Each driver $\alpha$ is assigned an initial position
  $x_\alpha^o$ in the first road of $\alpha$'s route
  $\mathcal{R}_\alpha$ and these initial positions satisfy
  condition~\textbf{(NoCollision)}.

  \noindent Then, the system of differential equations~\eqref{eq:8}
  admits a unique solution on the time interval
  $\left[0,+\infty\right[$. Moreover, at any positive $t$, the
  positions $x_\alpha (t)$ of the drivers at time $t$ along roads
  $r_\alpha (t)$, keep satisfying condition~\textbf{(NoCollision)}.
\end{theorem}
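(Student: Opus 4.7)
The plan is to construct the solution piecewise in time, exploiting the piecewise--Lipschitz structure of the right-hand side $\mathcal{V}_\alpha$ of~\eqref{eq:8}. At any instant $t$, the active branch among~\eqref{eq:3}--\eqref{eq:5} for each driver $\alpha$ is selected by a finite family of Boolean predicates involving strict inequalities: whether $x_\alpha < L_{r_\alpha(t)} - \ell$, whether the sets of other drivers on $r_\alpha(t)$ or on $\mathcal{N}_\alpha(r_\alpha(t))$ ahead of $\alpha$ are empty, and whether some road having priority over $r_\alpha(t)$ has a vehicle in its end-region. As long as these predicates remain constant, the map $x \mapsto \mathcal{V}_\alpha(t,x)$ is the composition of the Lipschitz speed law $v_{r_\alpha(t)}$ (by~\textbf{(SpeedLaw)}) with an affine function of $x$; the Cauchy--Lipschitz theorem then yields a unique local solution up to the first event time $\tau_1$ at which some predicate flips.

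Next I would show that the sequence $\{\tau_k\}$ of event times obtained by iterating this construction does not accumulate in finite time. The switching surfaces have one of the explicit forms $x_\alpha = L_{r_\alpha} - \ell$, $x_\alpha = L_{r_\alpha}$, or $x_{\alpha'} = x_{\alpha''}$. By~\textbf{(NoLoop)} and~\textbf{(NoDeadEnd)} each route has length at most $m$, so the total number of junction-crossing events is bounded by $nm$; between two consecutive junction events the relative ordering of cars on each road is invariant under the Lipschitz flow, so only finitely many further predicate changes can occur. Since $|\dot x_\alpha| \le V_{\max}$, positions are Lipschitz in time, no blow-up occurs, and the concatenation extends the solution to $[0,+\infty)$.

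Propagation of~\textbf{(NoCollision)} is the classical Follow-the-Leader comparison argument adapted to the network. If drivers $\alpha$ and $\alpha^-$ share a road and the gap $x_{\alpha^-} - x_\alpha$ approaches $\ell$, equation~\eqref{eq:3} (respectively~\eqref{eq:4},~\eqref{eq:2}, or~\eqref{eq:5} for the virtual gap $p + L_{r_\alpha(t)} - x_\alpha$ across a junction) gives $\dot x_\alpha = v_{r_\alpha(t)}(\ell/(x_{\alpha^-} - x_\alpha))$, which vanishes as its argument reaches $1$ by~\textbf{(SpeedLaw)}. Hence $\frac{d}{dt}(x_{\alpha^-} - x_\alpha) \ge 0$ whenever the gap equals $\ell$, so the gap never drops below $\ell$. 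The priority rule~\eqref{eq:7} prevents merging collisions by forcing the non-priority driver to stop while a priority car occupies the junction region.

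The main obstacle I anticipate lies in the jump discontinuities of $\mathcal{V}_\alpha$ at switching surfaces, in particular the release from the stopped state~\eqref{eq:7}: when a priority vehicle has cleared the junction, one must verify that restarting the non-priority driver does not instantaneously violate~\textbf{(NoCollision)} with the vehicle just released into $\mathcal{N}_\alpha(r_\alpha(t))$. This is resolved by observing that the virtual-gap estimate of the previous paragraph applies continuously across the event, so the restart already meets the required spacing; Cauchy--Lipschitz on the new branch then gives a unique continuation. Transversality of the switching surfaces, guaranteed by the strict inequalities in the defining predicates together with $V_{\max} > 0$, ensures that the choice of next branch is unambiguous, completing uniqueness.
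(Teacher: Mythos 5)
Your event-driven construction (freeze the Boolean predicates, apply Cauchy--Lipschitz between consecutive switching times, and count events to exclude accumulation in finite time) is a genuinely different route from the paper's, which instead orders the \emph{unknowns}: it first solves the standard Follow-the-Leader system on the Exit Roads, then proceeds upstream road by road and, within each junction, in order of priority, so that each driver's equation becomes a scalar ODE whose inputs --- the predecessor's trajectory and the occupancy of the higher-priority roads --- have already been uniquely determined; \textbf{(NoLoop)} guarantees that this induction terminates, and globality follows from a maximal-interval argument using the Lipschitz continuity of the travelled distance $y_\alpha$. Your event count and your propagation of \textbf{(NoCollision)} via the vanishing of $v$ at density $1$ are sound and are essentially what the paper leaves implicit.

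The genuine gap is your justification of uniqueness across the switching surfaces. You appeal to ``transversality \dots guaranteed by the strict inequalities together with $V_{\max}>0$'', but the vector field here can and does vanish: a driver stopped by \eqref{eq:7} sits in a switching region with zero velocity, and the predicate that releases it is triggered by \emph{another} driver's coordinate, not by its own trajectory crossing a surface transversally. For a discontinuous right-hand side, ``determine the new branch and continue'' produces \emph{a} solution but does not by itself exclude other Carath\'eodory solutions (sliding or grazing behaviour at the surface), so uniqueness is precisely the point that still needs an argument. What closes it --- and what your proposal never identifies --- is the causal structure of \eqref{eq:3}--\eqref{eq:5}: $\mathcal{V}_\alpha$ depends only on the positions of drivers ahead of $\alpha$ in the relation ``downstream along the route, or at the end of a road with higher priority at the same junction'', and since every route reaches an Exit Road without repeating a road (\textbf{(NoLoop)}, \textbf{(NoDeadEnd)}), this dependency can be resolved by induction from the Exit Roads backwards. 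The system is then triangular: each $x_\alpha$ solves a scalar equation $\dot x_\alpha=f(t,x_\alpha)$ with $f$ measurable in $t$ (the switches caused by other drivers enter as known functions of time), nonnegative, and Lipschitz in $x_\alpha$ away from the single threshold $L_{r_\alpha}-\ell$; forward uniqueness for such monotone scalar problems is elementary. Without this reduction, or an equivalent selection principle at the discontinuities, your concatenated trajectory is not yet provably \emph{the} solution. A secondary slip: $\mathcal{V}_\alpha$ is $v_{r_\alpha}$ composed with $x\mapsto\ell/(p-x_\alpha)$, which is not affine; it is nevertheless globally Lipschitz because $v(\rho)=0$ for $\rho\ge1$ neutralizes the singularity of $\ell/(p-x_\alpha)$ as the gap closes, and that is the fact you need to invoke rather than affinity.
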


\begin{remark}
  \label{rem:no}
  System~\eqref{eq:8} may not have good stability properties
  concerning the dependence of solutions on the initial data, which is
  consistent with the common driving experience.

  Indeed, consider the case in Figure~\ref{fig:controEx}. The Entry
  Roads~$1$ and~$2$ end in the same junction, where the Exit Road $3$
  begins. Road~$1$ yields priority to road~$2$. For simplicity, choose
  the same speed law, say $v (\rho) = 1-\rho$, along all roads.

  Fix a sufficiently small $\epsilon > 0$. At time $t = 0$, driver $1$
  is at $x^o_1 = -\ell - \epsilon$, while driver $2$ is at
  $x^o_2 = - \epsilon^2$, see Figure~\ref{fig:controEx} (left). Then,
  the solution to~\eqref{eq:8} consists in driver $1$ passing through
  the junction and with driver $2$ following.
  \begin{figure}[h!]
    \centering\input{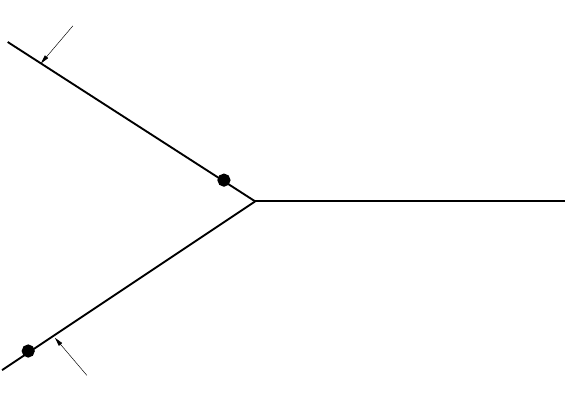tex_t}\qquad\input{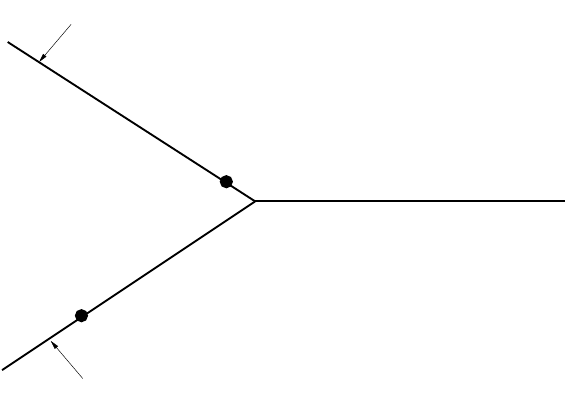tex_t}
    \caption{These initial data are arbitrarily near but lead to
      solutions to~\eqref{eq:8} that are uniformly distant, due to the
      presence of priority rules.}
    \label{fig:controEx}
  \end{figure}
  On the other hand, if driver~$1$ starts from
  $x^o_1 = -\ell + \epsilon$ with driver~$2$ always at
  $x^o_2 = - \epsilon^2$, see Figure~\ref{fig:controEx} (right), then
  driver~$2$ stops owing priority to driver~$1$. The two resulting
  solutions are uniformly different as $\epsilon \to 0$.
\end{remark}

\section{Emergence of Braess Paradox}
\label{sec:BP}

In this section we show the emergence of Braess paradox in a
non-stationary setting, obtained within the framework of the system of
differential equations~\eqref{eq:8} on the network depicted in
Figure~\ref{fig:Braess}.

The seven roads are numbered as in Figure~\ref{fig:Braess} and the
Middle Roads are assigned the lengths
$L_2 = L_3 = L_5 = L_6 = \sqrt{2}$ and $L_4 = 2$. We consider the
routes
\begin{equation}
  \label{eq:13}
  \mathcal{R}_0 = [1, \, 3, \, 6, \, 7],
  \qquad
  \mathcal{R}_1 = [1, \, 2, \, 5, \, 7],
  \quad\mbox{ and } \quad
  \mathcal{R}_2 = [1, \, 3, \, 4, \, 5,\, 7] ,
\end{equation}
using the following priorities:
\begin{equation}
  \label{eq:11}
  \begin{array}{@{\,}l}
    \mbox{road }4 \mbox{ has the priority over road }2,
    \\
    \mbox{road }5 \mbox{ has the priority over road }6.
  \end{array}
\end{equation}
This means that the route $\mathcal{R}_2$ with the Middle Road $4$,
has priority over the other routes. Along road $j$ we use the speed
law $v_j$, for $j=1, \ldots, 7$, where
\begin{equation}
  \label{eq:12}
  \begin{array}{@{\,}@{}r@{\,}c@{\,}l@{\ }r@{\,}c@{\,}l@{\ }r@{\,}c@{\,}l}
    v_1 (\rho)
    & =
    & 0.9 \; (1-\rho),
    & v_2 (\rho)
    & =
    & 0.6 \; \sqrt{1-\rho},
    & v_3 (\rho)
    & =
    & (1-\rho)^{10},
    \\
    v_4 (\rho)
    & =
    & 8.0 \; (1-\rho),
    &
      v_5 (\rho)
    & =
    & 1.2 \; (1-\rho)^6,
    & v_6 (\rho)
    & =
    & \sqrt{1-\rho},
  \end{array}
  \qquad
  v_7 (\rho) = 1-\rho\,.
\end{equation}
The vehicles' length is $\ell = 0.1$.  We consider $n = 180$ drivers
leaving at time $t = 0$ from positions $x^o_1, \ldots, x^o_{180}$
evenly spaced in the interval $[-36, -0.1]$. Through a random number
generator, we randomly assign the route to each driver according to
the proportions $\theta_0, \theta_1, \theta_2$, $\theta_k$ being the
percentage of driver following the route $\mathcal{R}_k$.  Thus
$\theta_k\in[0,1]$ with $\sum_k \theta_k = 1$.

By means of Euler polygonals, with time step $h = 0.01$, we compute
(approximate) solutions to~\eqref{eq:8}. Each integration is repeated
$20$ times with different route assignments to the drivers, but
assigning the same frequencies $\theta_0, \theta_1$, and
$\theta_2$. For each driver $\alpha$, we compute the travel time as
the first time step when $\alpha$ is on road $7$. Then, all travel
times are averaged over the drivers following the same route, and the
results are displayed in Table~\ref{tab:1}.

\begin{table}[h!]
  \begin{center}
    \noindent\!\!%
    \begin{tabular}{@{}rrr|rrr|rrr|@{}r@{}}
      \multicolumn{3}{c|}{Assigned Distrib.}
      & \multicolumn{3}{c|}{Effective Distribution}
      & \multicolumn{3}{c|}{Travel Time}
      & \multicolumn{1}{c}{Mean}
      \\
      $\theta_0$
      & $\theta_1$
      & $\theta_2$
      & $\Theta_0$
      & $\Theta_1$
      & $\Theta_2$
      & $T_0$
      & $T_1$
      & $T_2$
      & $\sum_i \! \Theta_i  T_i$
      \\[2pt] \hline
      0.00
      & 0.00
      & 1.00
      &  0.0000
      & 0.0000
      & 1.0000
      & //
      & //
      & 105.4
      & 105.4
      \\ \hline
      0.05
      & 0.05
      & 0.90
      & 0.05222
      & 0.05028
      & 0.8975
      & \textbf{105.6}
      & \textbf{106.7}
      & \textbf{100.1}
      & 100.7
      \\
      0.06
      & 0.06
      & 0.88
      & 0.05833
      & 0.05861
      & 0.8831
      & \textbf{102.7}
      & \textbf{100.3}
      & \textbf{99.33}
      & 99.58
      \\
      0.07
      & 0.07
      & 0.86
      & 0.07056
      & 0.07028
      & 0.8592
      & \textbf{100.0}
      & \textbf{101.6}
      & \textbf{98.21}
      & 98.58
      \\
      0.06
      & 0.04
      & 0.90
      & 0.05917
      & 0.03444
      & 0.9064
      & 101.7
      & 106.4
      & 101.9
      & 102.1
      \\
      0.04
      & 0.06
      & 0.90
      & 0.04333
      & 0.06167
      & 0.8950
      & 100.0
      & 95.95
      & 99.10
      & 98.95
      \\
      0.30
      & 0.30
      & 0.40
      & 0.3083
      & 0.2761
      & 0.4156
      & 76.99
      & 77.55
      & 79.84
      & 78.33
      \\ \hline
      0.45
      & 0.45
      & 0.10
      & 0.4467
      & 0.4486
      & 0.1047
      & \textbf{63.06}
      & \textbf{65.45}
      & \textbf{60.79}
      & 63.89
      \\
      0.47
      & 0.47
      & 0.06
      & 0.4761
      & 0.4633
      & 0.06056
      & \textbf{61.93}
      & \textbf{62.63}
      & \textbf{60.02}
      & 62.14
      \\
      0.50
      & 0.50
      & 0.00
      & 0.4983
      & 0.5017
      & 0.0000
      & 58.45
      & 60.00
      & //
      & 59.23
    \end{tabular}
  \end{center}
  \caption{Sample results obtained from integrating~\eqref{eq:8}. The
    travel times $T_k$, for $k=0, 1, 2$, are the averages of the times
    at which drivers following route $k$ enter road~$7$. The mean
    travel time is $\sum_{k=0}^2 \Theta_k T_k$, where $\Theta_k$ is
    the actual portion of drivers following route $k$.\label{tab:1}}
\end{table}

Here, the \emph{travel time} $T_k$ is the average time that drivers
following route $k$ need to reach road~$7$.

The bold travel times in Table~\ref{tab:1} display situations fully
coherent with Braess paradox and with
$(\theta_0, \theta_1, \theta_2) = (0, \, 0, \, 1)$ being a Nash
equilibrium for the travel times.  Note also that all displayed
integrations are consistent with a \emph{weak}, but still surprising,
form of the Braess paradox, in the sense that the overall mean travel
times with the new road $4$ being present are all clearly larger than
the mean travel time without road $4$.

Figure~\ref{fig:queue} displays a sample integration of the model
described by~\eqref{eq:8} with speed laws~\eqref{eq:12}, where we can
see the effect of the priority of route
$\mathcal{R}_2 = [1, \, 3, \, 4, \, 5,\, 7]$, which contain the new
road $4$, over the other routes. As a consequence, a queue is formed
in road $6$.

\begin{figure}[!h]
  \includegraphics[width=0.5\textwidth,trim = 50 25 5
  70]{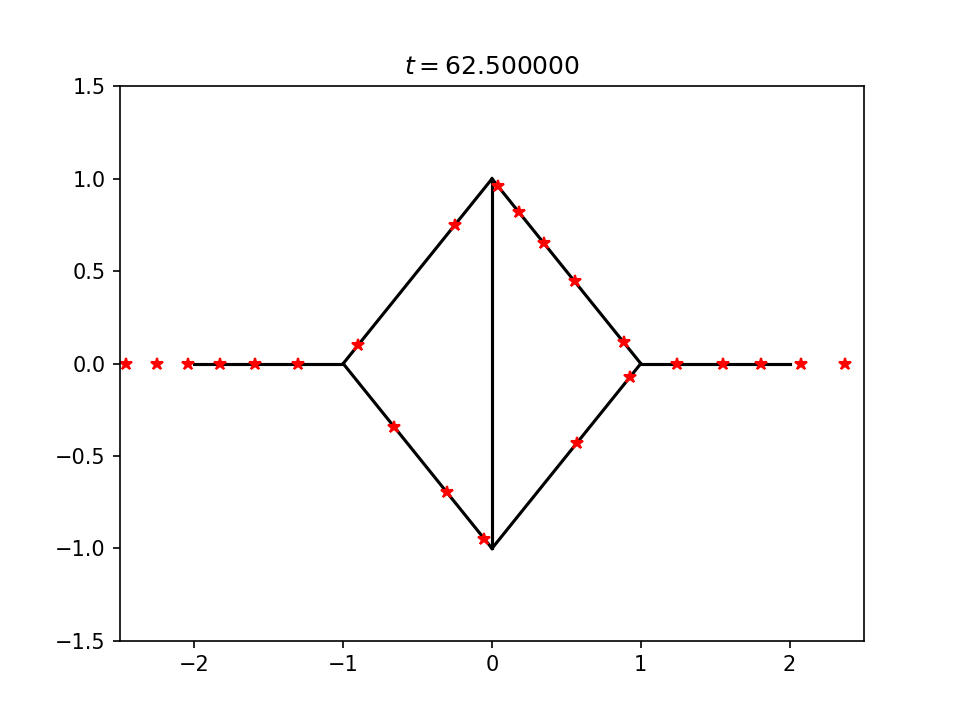}\ \ \
  \includegraphics[width=0.5\textwidth,trim = 50 25 5 70]{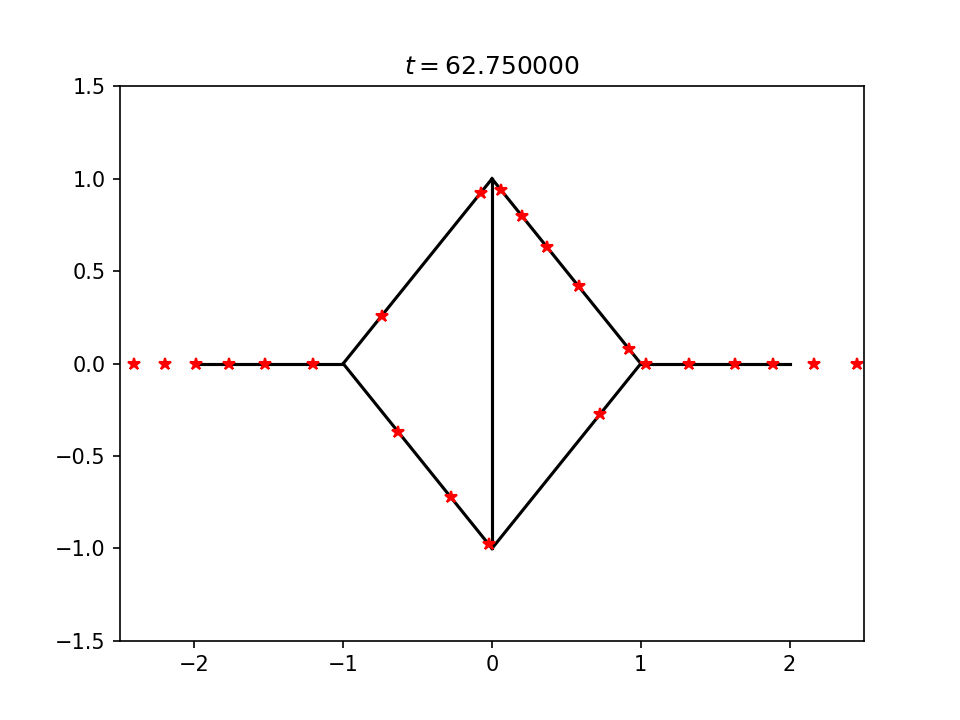}\\[30pt]
  \includegraphics[width=0.5\textwidth,trim = 50 25 5
  70]{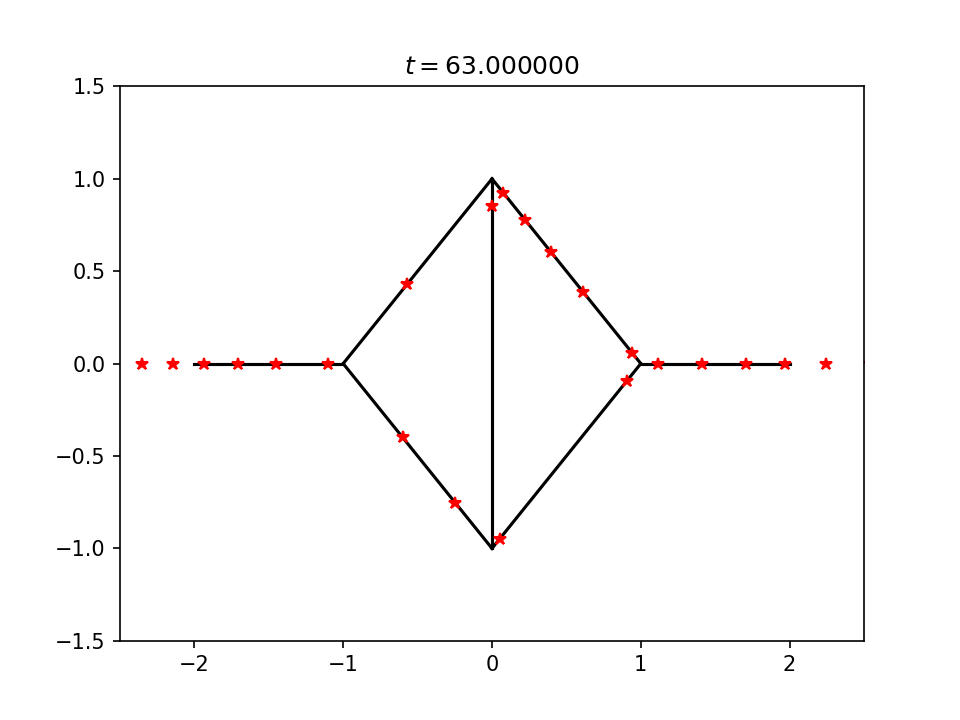}\ \ \
  \includegraphics[width=0.5\textwidth,trim = 50 25 5 70]{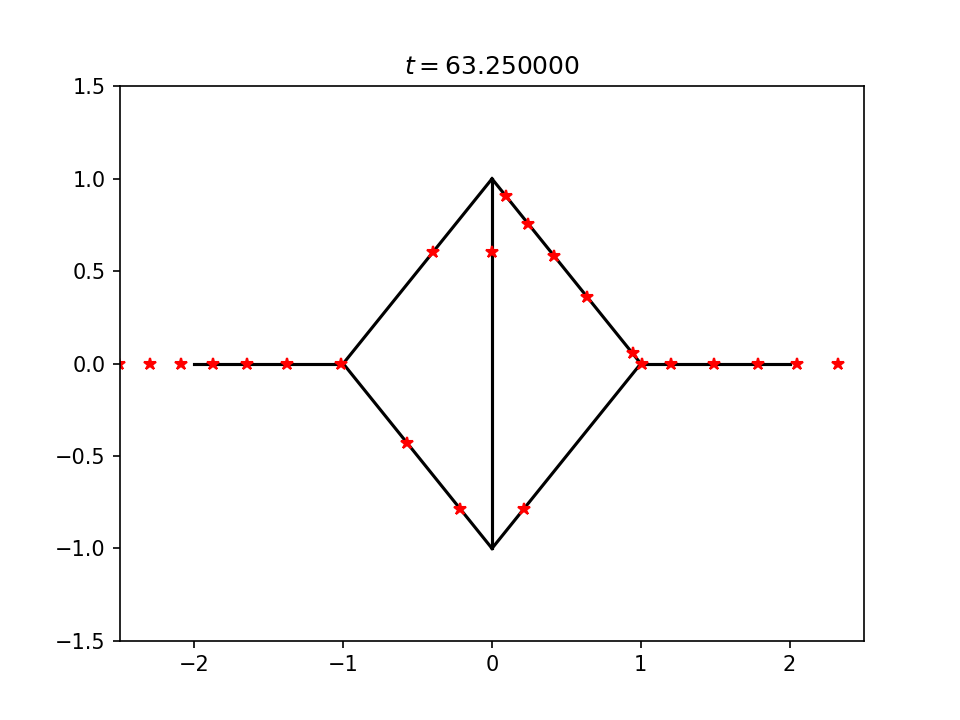}\\[30pt]
  \includegraphics[width=0.5\textwidth,trim = 50 25 5
  70]{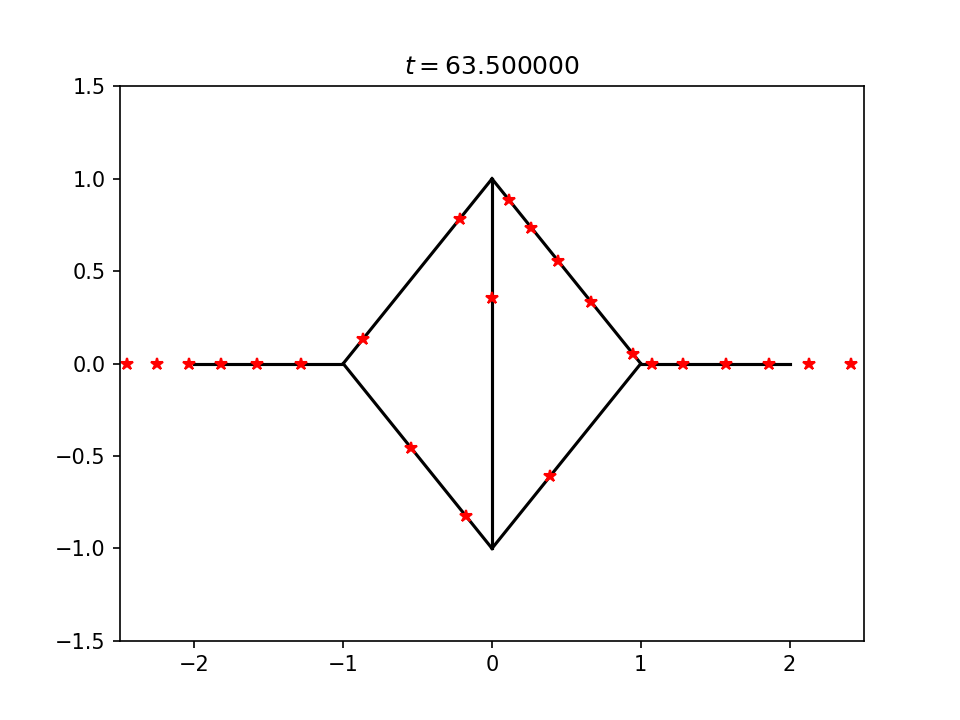}\ \ \
  \includegraphics[width=0.5\textwidth,trim = 50 25 5 70]{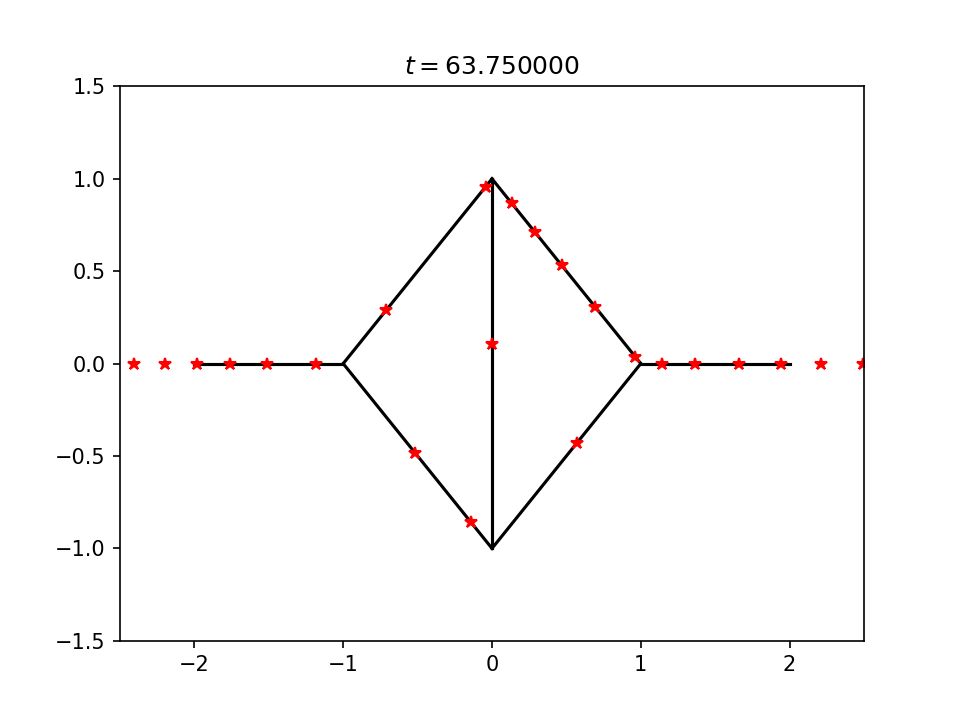}\\
  \caption{Sample integration of the model described
    by~\eqref{eq:8}. Note that the top right road yields priority to
    the bottom right one. As a consequence, all during the (short)
    time interval displayed, the 5 vehicles in the top right road line
    up and give priority to the vehicles coming from their
    right.\label{fig:queue}}
\end{figure}

\section{Analytic Proofs}

The following lemma tackles the basic local existence part of
Theorem~\ref{thm:1}.

\begin{lemma}
  \label{lem:1}
  With the assumptions and notations of Theorem~\ref{thm:1}, assume
  that at time $\bar t \geq 0$ the drivers are distributed along the
  network at positions $\bar x_\alpha (\bar t)$ satisfying
  condition~\textbf{(NoCollision)}. Then, there exists a positive
  $\epsilon$ and uniquely determined functions
  $x_\alpha \colon [\bar t, \bar t + \epsilon] \to \reali$
  solving~\eqref{eq:8}. Moreover, for all
  $t \in [\bar t, \bar t + \epsilon]$,
  condition~\textbf{(NoCollision)} holds.
\end{lemma}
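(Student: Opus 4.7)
The plan is to show that, although $\mathcal{V}_\alpha$ is discontinuous in $x$, on a short enough interval $[\bar t, \bar t+\epsilon]$ the dynamics of each driver freezes into a single regime among \eqref{eq:3}, \eqref{eq:4}, \eqref{eq:2}, \eqref{eq:6}, \eqref{eq:7}, and \eqref{eq:5}. This reduces~\eqref{eq:8} to a classical Lipschitz system to which the Picard--Lindel\"of theorem applies, and \textbf{(NoCollision)} can then be propagated by a direct follow-the-leader estimate.

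First, I would classify each driver $\alpha$ at time $\bar t$ by selecting which of the six formulas determines $\mathcal{V}_\alpha(\bar t, \bar x)$. Each classification rests on finitely many strict inequalities: whether $\bar x_\alpha$ is strictly less or greater than $L_{r_\alpha}-\ell$; whether the set of drivers preceding $\alpha$ on $r_\alpha(\bar t)$ is empty; whether the set on $\mathcal{N}_\alpha(r_\alpha)$ is empty; and whether any driver on a priority road is past $L_j-\ell$. Borderline configurations (e.g.\ $\bar x_\alpha$ exactly at $L_{r_\alpha}-\ell$) are resolved once and for all by a fixed right-continuous convention in $t$.

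Next, since $\mathcal{V}_\alpha \in [0, V_{\max}]$, each position changes by at most $V_{\max}\,\epsilon$ on $[\bar t,\bar t+\epsilon]$. Every strict inequality above has a positive slack, bounded below by a constant $\delta>0$ depending only on the finite initial configuration and the network geometry; here the bound $|x_{\alpha'}-x_{\alpha''}|\ge\ell$ from \textbf{(NoCollision)} is crucial for gaps involving same-road drivers. Choosing $\epsilon<\delta/(2\,V_{\max})$ freezes every classification on $[\bar t,\bar t+\epsilon]$. In this interval each $\mathcal{V}_\alpha$ reduces either to a constant ($V_{r_\alpha}$ or $0$) or to $v_{r_\alpha}\!\left(\ell/d_\alpha(x)\right)$, where $d_\alpha(x)$ equals $p-x_\alpha$ or $p+L_{r_\alpha}-x_\alpha$ with $p$ the position of a specific leader fixed by the classification. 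Since $d_\alpha\ge\ell$ and $v_{r_\alpha}$ is Lipschitz by \textbf{(SpeedLaw)}, the right-hand side is Lipschitz in $x$ on a neighborhood of the current configuration, so Picard--Lindel\"of yields the unique $C^1$ solution on $[\bar t,\bar t+\epsilon]$.

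Finally, to propagate \textbf{(NoCollision)}, consider two drivers $\alpha',\alpha''$ with $r_{\alpha'}(t)=r_{\alpha''}(t)$ and $x_{\alpha''}(t)>x_{\alpha'}(t)$. If at some $t$ one has $x_{\alpha''}(t)-x_{\alpha'}(t)=\ell$, then in every applicable regime the speed law gives $\dot x_{\alpha'}(t)=v_{r_{\alpha'}}(1)=0\le \dot x_{\alpha''}(t)$, so the gap is non-decreasing; an analogous bound holds when $\alpha'$ is near the end of its road, using the distance $p+L_{r_{\alpha'}}-x_{\alpha'}$. The main obstacle I foresee is the bookkeeping in the freezing step: one must verify that the finitely many gaps defining every driver's regime admit a \emph{common} positive lower bound, so that a single $\epsilon>0$ works simultaneously for all $n$ drivers and for all six possible regimes.
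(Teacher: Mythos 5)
Your strategy—freeze each driver's regime on a short interval so that \eqref{eq:8} becomes a Lipschitz system, then invoke Picard--Lindel\"of—is genuinely different from the paper's argument, and it has a real gap at the borderline configurations that you set aside with ``a fixed right-continuous convention in $t$.'' The difficulty is that these configurations are not exceptional: a driver who has to yield at a junction travels at speed $V_{r_\alpha}$ while $x_\alpha<L_{r_\alpha}-\ell$ and at speed $0$ once $x_\alpha>L_{r_\alpha}-\ell$, so the flow parks that driver \emph{exactly} at $x_\alpha=L_{r_\alpha}-\ell$. Since the lemma must be applied iteratively (in the proof of Theorem~\ref{thm:1}) to data produced by the flow itself, such data cannot be excluded. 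At such a point the slack $\delta$ of the corresponding inequality is zero, your choice $\epsilon<\delta/(2V_{\max})$ is vacuous, and the classification is not frozen by any positive $\epsilon$: whether the parked driver stays put or starts moving depends on whether some $\alpha'$ on a higher-priority road is, or is about to be, past $L_{j'}-\ell$---a condition that may itself be borderline and may depend on yet another priority road. A ``convention'' does not create slack, and resolving these coupled borderline classifications consistently requires processing the drivers in a definite order; your driver-by-driver classification treats them as independent, which they are only for generic data. (By contrast, the two benign borderline cases---a gap exactly equal to $\ell$, or a driver about to reach $L_j$---are indeed harmless, the first because the regime does not depend on gap sizes and $d\mapsto v_j(\ell/d)$ is globally Lipschitz, the second because roads are half-open.)

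The paper avoids freezing altogether by exploiting the causal structure of $\mathcal{V}$: driver $\alpha$'s speed depends only on drivers ahead of $\alpha$ on the same road, on drivers on $\mathcal{N}_\alpha(r_\alpha)$, and on drivers at the end of higher-priority roads at the junction ahead. Starting from the Exit Roads (a closed Follow-the-Leader system), it works backwards through the network---the sets $J_1, J_2,\dots$ of roads at increasing ``distance'' from the exits, with \textbf{(NoLoop)} guaranteeing termination---and, at each junction, through the incoming roads in priority order. Each driver's trajectory is then obtained from a scalar equation whose inputs are already-determined trajectories, so regime switches inside the interval merely make the right-hand side an $\L1$ function of time and need no genericity assumption. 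If you want to keep your route, you would have to supplement the freezing step with exactly this ordering (priority order at each junction, topological order of the routes) to resolve the zero-slack classifications, at which point you have essentially reconstructed the paper's induction.
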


\begin{proof}
  The proof is divided in three steps.

  \bigskip

  \noindent\textbf{1.} For all $\alpha$ such that $r_\alpha (\bar t)$
  is an Exit Road, the function $x_\alpha$ can be uniquely defined on
  $\left[\bar t, + \infty\right[$ solving a standard Follow-the-Leader
  ODE system. Note that, by the standard properties of this model,
  these $x_\alpha$ satisfy the~\textbf{(NoCollision)} condition.

  \bigskip

  \noindent\textbf{2.}  If Middle Roads and Entry Roads are empty, the proof
  is finished.

  Otherwise, introduce the set of Entry and Middle Roads where there
  is at least one driver at time $\bar t$ whose next road is an Exit
  Road:
  \begin{displaymath}
    J_1 = \{j \colon \exists \alpha \in \{1, \ldots, n\}
    \mbox{ with } r_\alpha (\bar t) = j
    \mbox{ and }
    \mathcal{N}_\alpha\left(r_\alpha (\bar t)\right)
    \mbox{ is an Exit Road}
    \} \,.
  \end{displaymath}
  Each road $j \in J_1$ ends at a junction where an Exit Road
  begins. Consider one of these junctions, say $C$ and call
  $j_1, \ldots, j_k$ the roads in $J_1$ entering $C$. We may assume
  that $j_1 < j_2 < \cdots < j_k$, so that road $j_1$ has the
  priority. The drivers along road $j_1$ are at positions
  $x_{\alpha_1} (\bar t), x_{\alpha_2} (\bar t), \cdots,
  x_{\alpha_\nu} (\bar t)$, with
  $L_{j_1} > x_{\alpha_1} (\bar t)> x_{\alpha_2} (\bar t) > \cdots >
  x_{\alpha_\nu} (\bar t) \geq 0$. The trajectory of driver $\alpha_1$
  is uniquely determined, since all trajectories along Exit Roads are
  known. Therefore, along road $j_1$, the usual Cauchy theorem for
  ODEs ensures the existence and uniqueness of a solution
  to~\eqref{eq:8} at least on the time interval
  $[\bar t, \bar t + \epsilon]$, where
  $\epsilon = (L_{j_1} - x_{\alpha_1} (\bar t)) / V_{\max}$ and, by
  construction, condition~\textbf{(NoCollision)} holds.

  Assume now that all drivers' trajectories along roads
  $j_1, \ldots, j_{h-1}$ are uniquely defined on the time interval
  $[\bar t, \bar t+ \epsilon]$, for a positive $\epsilon$. Denote by
  $\alpha_1, \ldots, \alpha_\nu$ the drivers on road $j_h$, with
  $L_{j_h} > x_{\alpha_1} (\bar t)> x_{\alpha_2} (\bar t) > \cdots >
  x_{\alpha_\nu} (\bar t) \geq 0$. The speed of $\alpha_1$ is a unique
  non-negative $\L1$ function defined at least on the time interval
  $[\bar t, \bar t + \epsilon']$, where
  $\epsilon' = \min \{\epsilon, (L_{j_h} - x_{\alpha_1} (\bar
  t))/V_{\max}\}$, so that the trajectory of $\alpha_1$
  solves~\eqref{eq:8} and is uniquely defined. Iteratively, the same
  holds first for the trajectories of $\alpha_2, \ldots, \alpha_\nu$
  and then along all other roads $j_{h+1}, \ldots, j_k$, always
  complying with condition~\textbf{(NoCollision)}.

  \bigskip

  \noindent\textbf{3.} By condition~\textbf{(NoLoop)}, the above
  procedure can be iterated, covering the whole network and without
  considering the same interval twice. Indeed, consider the set of
  roads entering $J_i$:
  \begin{displaymath}
    J_{i+1} = \{j \colon \exists \alpha \in \{1, \ldots, n\}
    \mbox{ with } r_\alpha (\bar t) = j
    \mbox{ and }
    \mathcal{N}_\alpha\left(r_\alpha (\bar t)\right) \in J_i
    \}
  \end{displaymath}
  and proceed exactly as in the step~\textbf{2} above.

  \bigskip

  Here, a unique solution to~\eqref{eq:8} was constructed on the time
  interval $[\bar t, \bar t + \epsilon_*]$, complying with
  condition~\textbf{(NoCollision)}, where $\epsilon_*$ is the minimum
  of a finite quantity of positive numbers. The proof is completed.
\end{proof}

Below, for each driver, we also use the time-dependent coordinate
$y_\alpha (t)$, which quantifies the total distance driven by the
driver $\alpha$ at time $t$. For instance, with reference to
Figure~\ref{fig:Braess}, if the driver $\alpha = 2$ follows the route
$\mathcal{R}_2 = [0, 2, 3, 4, 6]$, starting from
$x^o_2 \in \left]-\infty,0\right[$ in the Entry Road $j = 0$ at time
$0$ and at time $t$ is moving along road $4$, then $r_2 (t) = 4$ and
$y_2 (t) = \modulo{x^o_2} + L_2 + L_3 + x_2 (t)$, with
$x_2 (t) \in \left[0,L_4\right[$.

Given the route $\mathcal{R}_\alpha = [j_0, j_1, \ldots, j_k]$ (with
road lengths $L_{j_0}, L_{j_1}, \ldots $,) for driver $\alpha$, the
initial position $x^o_\alpha$ and $x_\alpha (t)$, the length
$y_\alpha (t)$ covered by the $\alpha$th driver is uniquely
determined. Indeed, if at time $t$ the $\alpha$ driver is along road
$j_{i_*}=r_\alpha(t)$, we have
\begin{equation}
  \label{eq:10}
  y_\alpha (t) =
  \left\{
    \begin{array}{@{\,}ll}
      \modulo{x_\alpha^o} + \sum_{i < i_*} L_{j_i} + x_\alpha (t)
      & \alpha \mbox{ starts from an Entry Road,}
      \\
      L_{i_1} - x_\alpha^o + \sum_{i < i_*} L_{j_i} + x_\alpha (t)
      & \alpha \mbox{ starts from a Middle Road,}
      \\
      x_\alpha (t) - x^o_\alpha
      & \alpha \mbox{ starts from an Exit Road.}
    \end{array}
  \right.
\end{equation}
The inverse correspondence is straightforward.

\begin{proofof}{Theorem~\ref{thm:1}}
  By Lemma~\ref{lem:1}, for given initial data, problem~\eqref{eq:8}
  admits a unique solution on the interval $[0, \epsilon_0]$, for a
  positive $\epsilon_0$.

  Prolong, in a unique way, the solution to~\eqref{eq:8} on the time
  interval $[\epsilon_0, \epsilon_0 + \epsilon_1]$ applying
  Lemma~\ref{lem:1}.

  We claim that a solution to~\eqref{eq:8} can be uniquely constructed
  on all $\left[0, +\infty\right[$. Indeed, assume (by contradiction)
  that the above procedure yields a solution $x_\alpha$, for
  $\alpha \in \{1, \ldots, n\}$ defined on the maximal time interval
  $\left[0, T\right[$, for a positive $T$. For all
  $\alpha \in \{1, \ldots, n\}$, the corresponding function
  $t \mapsto y_\alpha (t)$ is defined on $\left[0, T\right[$ and it is
  Lipschitz continuous, hence it is uniformly continuous and can be
  uniquely extended by continuity to the time interval $[0,T]$. As a
  consequence, also $x_\alpha$ can be uniquely extended to the whole
  interval $[0,T]$. At time $T$, we thus apply again
  Lemma~\ref{lem:1}, obtaining a solution defined on
  $\left[0, T+\epsilon_*\right[$, for a positive $\epsilon_*$. This
  contradicts the maximality of the above choice of $T$.
\end{proofof}

\section{Conclusions}
\label{sec:C}

This paper provides the analytic framework to use microscopic traffic
model on road networks. Traffic at junctions is ruled by fixed
priority rules, so that queues may form and disappear, depending on
the overall traffic distribution. Existence and uniqueness of
solutions is proved, while continuous dependence may fail, which is
consistent with everyday experience. Moreover, vehicles may not
collide, once the initial datum assigned is reasonable.

This framework is then used to describe a non-stationary instance of
Braess paradox. Adding a very fast road to an existing network may
increase the travel times. Here, a game theoretic approach was used,
each driver being a player aiming at minimizing his/her travel time.

On the basis of the present results, further questions arise and can
be tackled. A very appealing research direction concerns the
\emph{control} of network traffic. For instance,
following~\cite{BressanHan2011, BressanHan2012}, can the introduction
of a suitable toll avoid the insurgence of Braess paradox? Once
Theorem~\ref{thm:1} is extended to time dependent priority rules
(i.e., traffic lights), which seems a merely technical issue, is it
possible to find optimal timings at the junctions that minimize travel
times?

\section*{Acknowledgments}
RMC and FM were partially supported by the INdAM-GNAMPA 2019 project
\emph{Partial Differential Equations of Hyperbolic or Nonlocal Type
  and Applications}. The research of HH was supported by the grant
{\it Waves and Nonlinear Phenomena (WaNP)} from the Research Council
of Norway. The \emph{IBM Power Systems Academic Initiative}
substantially contributed to the numerical integrations.

\newpage
\bibliographystyle{siamplain}
\bibliography{ColomboHoldenMarcellini_revised}

\end{document}